\definecolor{Red}{rgb}{1,0,0}
\definecolor{Blue}{rgb}{0,0,1}
\definecolor{Olive}{rgb}{0.41,0.55,0.13}
\definecolor{Green}{rgb}{0,1,0}
\definecolor{MGreen}{rgb}{0,0.8,0}
\definecolor{DGreen}{rgb}{0,0.55,0}
\definecolor{Yellow}{rgb}{1,1,0}
\definecolor{Cyan}{rgb}{0,1,1}
\definecolor{Magenta}{rgb}{1,0,1}
\definecolor{Orange}{rgb}{1,.5,0}
\definecolor{Violet}{rgb}{.5,0,.5}
\definecolor{Purple}{rgb}{.75,0,.25}
\definecolor{Brown}{rgb}{.75,.5,.25}
\definecolor{Grey}{rgb}{.5,.5,.5}
\definecolor{Pink}{rgb}{1,0,1}
\definecolor{DBrown}{rgb}{.5,.34,.16}
\definecolor{Black}{rgb}{0,0,0}
\newcommand{\address}[1]{
\par {\raggedright #1
\vspace{1.4em}
\noindent\par}
}
\newtheorem{lemma}{Lemma}
\newtheorem{theorem}{Theorem}
\newenvironment{proof}[1][Proof]{\begin{trivlist}
\item[\hskip \labelsep {\bfseries #1}]}{\end{trivlist}}
\newenvironment{definition}[1][Definition]{\begin{trivlist}
\item[\hskip \labelsep {\bfseries #1}]}{\end{trivlist}}
\newcommand{\qed}{\nobreak \ifvmode \relax \else
      \ifdim\lastskip<1.5em \hskip-\lastskip
      \hskip1.5em plus0em minus0.5em \fi \nobreak
      \vrule height0.75em width0.5em depth0.25em\fi}
\newcommand{\enp} {\hfill \rule{2.2mm}{2.6mm}}
\newcommand{\mc}{\mathcal}
\newcommand{\St}{\mathrm{St}}
\newcommand{\nl}{\mathrm{null}}
\title{A rigorous analysis of the cavity equations for the minimum spanning tree}
\date{}
\begin{document}

\author{}

\author{Mohsen Bayati$^1$ \and Alfredo Braunstein$^2$ \and Riccardo Zecchina$^2$}

\maketitle

\begin{abstract}
We analyze a new general representation for the Minimum Weight Steiner
Tree (MST) problem which translates the topological connectivity
constraint into a set of local conditions which can be analyzed by the
so called cavity equations techniques.  For the limit case of the
Spanning tree we prove that the fixed point of the algorithm arising
from the cavity equations leads to the global optimum.
\end{abstract}

\address{$^1$ Microsoft Research, One Microsoft Way, 98052 Redmond, WA\\
$^2$ Politecnico di Torino, Corso Duca degli Abruzzi 24, 10129 Torino, Italy}

\maketitle

\section{Introduction}

Given a graph with positive weights on the edges, the MST problem
consists in finding a tree of minimum weight that contains a given set
of ``terminal'' vertices.  Such construction may require the inclusion
of some non\-terminal nodes which are called Steiner nodes.
Beside its practical importance in many fields, MST is a basic
optimization problem over networks which lies at the root of computer
science, being both NP-complete \cite{Karp1972} and difficult to
approximate \cite{robins2000ist}. In statistical physics the Steiner
tree problem has similarities with basic models such as polymers and
self avoiding walks with a non-trivial interplay between local an
global constraints, e.g. energy minimization versus global
connectivity.
In recent years many algorithmic results have appeared showing the
efficacy of the cavity approach for optimization and inference
problems defined over both sparse and dense random networks of
constraints
\cite{MPZ,Braunstein2006b,Frey2007,Braunstein2003,Braunstein2005,di2004wdl}.
These performances are understood in terms of factorization properties
of the Gibbs measure over ground states, which can be also seen as the
onset of correlation decay along the iterations of the cavity
equations \cite{KMRSZ2007}.
Here we make a step further in this direction by presenting evidence
for the exactness of the cavity approach for problems having an
additional rigid global constraint which couples all variables.
We show that the cavity approach can be used to derive a new algorithm \cite{BBBCRZ-PRL08}
for MST which has exact fixed points in the limit case of the
Spanning Tree.  More specifically, we show how the analysis of the
computational tree which characterizes the evolution of the so called
cavity marginals can be used to prove optimality.

\section{Definitions and Problem Statement}\label{sec:Def-Prob-Stat}

Consider an undirected simple graph $G=(V,E)$, with vertices $V =
\{1,\ldots,n\}$, and edges $E$. Let each edge $\{i,j\}$ have weight
$w_{ij}\in \mathbb{R}$. Denote {the} set of neighbors of each vertex
$i$ in $G$ by $N(i)$. Let $U$ be a subset of vertices called
\emph{terminals}.  A connected subgraph $T$ of $G$ is called
\emph{Steiner tree} if it has no cycle and contains all vertices of
$U$.  For the special case of $U=V$, the tree $T$ is called a
\emph{spanning tree}.  The set of all Steiner trees of the graph $G$
with terminals $U$ is denoted by $\St(G,U)$.

The weight of the Steiner tree $T$, denoted by $W_T$, is defined by
$W_{T}=\sum_{ij}w_{ij}1_{\{i,j\}\in T}$. The minimum weight Steiner
tree (MST), $T^*(U)$, is defined by $T^*(U)=\textrm{argmin}_{T\in
  \St(G,U)}\ W_{T}$, and for spanning trees (when $U=V$) we drop the
reference to $U$ and denote it by $T^*$.  The goal of this paper is to
present a belief propagation (BP) based algorithm for finding $T^*(U)$
and analyze it.  \emph{Throughout the paper, we will assume that
  $T^*(U)$ is unique}. { If the optimum, $T^*(U)$, is not unique then the degeneracy can be lifted by a
  small random perturbation of the weights which does not change the optimum tree.}

\section{Algorithm and Main Result}\label{sec:alg-main-res}

In this section we explain the BP algorithm for finding the minimum
weight Steiner tree.  Let us quickly explain the model. This is done
in more details in \cite{BBBCRZ-PRL08}.

\subsection{The pointer-depth model}

We model the Steiner tree problem as a rooted tree (such a
construction is often associated with the term
\emph{arborescence}). Name the vertex $1\in V$ the \emph{root}.  Then
each node $i$ is endowed with a pair of variables
$\left(p_{i},d_{i}\right)$, a pointer $p_{i}$ to some other node in
the neighborhood $N(i)$ of $i$ and a depth $d_{i}\in\left\{
1,\dots,d_{\max}\right\} $ defined as the distance from the root. Terminal
nodes (vertices in $U$) must point to some other node in the final
tree and hence $p_{i}\in N(i)$. The root node conventionally points to
itself . Non-root nodes either point to some other node in $N(i)$ if
they are part of the tree (\emph{Steiner} and \emph{terminal} nodes)
or just do not point to any node if they are not part of the tree
(allowed only for non-terminals), a fact that we represent by allowing
a ``$\nl$'' state for the pointer $p_{i}$. i.e. $p_i\in
N(i)\cup\{\nl\}$. The depth of the root is set to zero, $d_{1}=0$
while for the other nodes in the tree the depths measure the distance
from the root along the unique simple path from the node to the root.

In order to impose the global connectivity constraint for the tree we
need to impose the condition that if $p_{i}=j$ then $d_{j}=d_{i}-1$.
This condition forbids cycles and guarantees that the pointers
describe a tree. In building the BP equations, we need to introduce
the characteristic functions $f_{ij}=f_{ij}(p_i,d_i,p_j,d_j)$ which
impose such constraints over configurations of the decision variables
$\left(p_{i},d_{i}\right)$. For any edge $\left(i,j\right)$ we have
the indicator function $f_{ij}=g_{ij}g_{ji}$ where
$g_{jk}(p_j,d_j,p_k,d_k)=\left(1-\delta_{p_{k},j}\left(1-\delta_{d_{j},d_{k}-1}\right)\right)\left(1-\delta_{p_{k},j}\delta_{p_{j},\emptyset}\right)$.
Therefore any set of the decision variables
$\left\{p_{i},d_{i}\right\}_i$ that satisfies the condition
$\prod_{(i,j)\in E}f_{ij}(p_i,d_i,p_j,d_j)=1$ corresponds to a Steiner
tree in $St(G,U)$.

\subsection{BP Equations and the Algorithm}
Let us define $w_{i~\nl}=\infty$ for any { $i\notin U$}. Then the max-sum BP equations will be the followings:
\begin{align}
\psi_{j\to i}\left(d_{j},p_{j}\right)= & -w_{jp_{j}}+\sum_{k\in
  j\setminus i}\phi_{k\to
  j}\left(d_{j},p_{j}\right)\label{eq:psi}\\ \phi_{k\to
  j}\left(d_{j},p_{j}\right)= &
\max_{d_{k},p_{k}:f_{jk}\left(d_{k},p_{k},d_{j},p_{j}\right)\neq0}\psi_{k\to
  j}\left(d_{k},p_{k}\right)\label{eq:phi2}
\end{align}
On a tree $\psi_{j\to i}(d_{j},p_{j})$ can be interpreted as the
minimum cost change of removing a vertex $j$ with forced configuration
$d_{j},p_{j}$ from the subgraph with link $\left(i,j\right)$ already
removed.

On a fixed point, one computes \emph{marginals} $\psi_j$:
\begin{equation}
\psi_j\left(d_j,p_j\right)=-w_{jp_j}+\sum_{k\in j}\phi_{k\to
  j}(d_j,p_j)\label{eq:marg}\\
\end{equation}
and the BP guess of the optimum tree is given by $\arg\max \psi_j$.

{ For efficient implementation of the equations \eqref{eq:psi}-\eqref{eq:phi2} we}
introduce the variables $A_{k\to j}^{d}\equiv\max_{p_{k}\neq
  j,\nl}\psi_{k\to j}\left(d,p_{k}\right)$, $B_{k\to
  j}^{d}\equiv\psi_{k\to j}\left(d,\nl\right)$, $C_{k\to
  j}^{d}\equiv\psi_{k\to j}\left(d,j\right)$, $D_{k\to
  j}\equiv\max_{d}\max\left\{ A_{k\to j}^{d},B_{k\to j}^{d}\right\} $
and $E_{k\to j}^{d}\equiv\max\left\{ C_{k\to j}^{d+1},D_{k\to
  j}\right\} $.  This is enough to compute $\phi_{k\to
  j}\left(d_{j},p_{j}\right)=A_{k\to j}^{d_{j}-1},D_{k\to j},E_{k\to
  j}^{d_{j}}$ for $p_{j}=k$, $p_{j}=\nl$ and $p_{j}\neq k,\nl$
respectively.  Eqs. \ref{eq:psi}-\ref{eq:phi2} can then be solved by
repeated iteration of the following set of equations: \begin{eqnarray}
  A_{j\to i}^{d}\left(t+1\right) & = & \sum_{k\in j\setminus i}E_{k\to
    j}^{d}\left(t\right) + \max_{k\in i\setminus j}\left\{ A_{k\to
    j}^{d-1}\left(t\right)-E_{k\to
    j}^{d}\left(t\right)-w_{jk}\right\} \label{eq:A} \\ B_{j\to
    i}\left(t+1\right) & = & -w_{j\nl}+\sum_{k\in j\setminus i}D_{k\to
    j}\left(t\right)\label{eq:B}\\ C_{j\to i}^{d}\left(t+1\right) & =
  & -w_{ji}+\sum_{k\in j\setminus i}E_{k\to
    j}^{d}\left(t\right)\label{eq:C}\\ D_{j\to i}\left(t\right) & = &
  \max\left(\max_{d}A_{j\to i}^{d}\left(t\right),B_{j\to
    i}\left(t\right)\right)\label{eq:D}\\ E_{j\to i}^{d}\left(t\right)
  & = & \max\left(C_{j\to i}^{d+1}\left(t\right),D_{j\to
    i}\left(t\right)\right)\label{eq:E}\end{eqnarray}

Messages are initialized arbitrarily (e.g. all set to 0 at time $t=0$). Equations \ref{eq:A}-\ref{eq:E} are iterated for $t=0,1,\dots$ until $M(t)$ converges. At each iteration $t$ the estimated MST is computed as  $T(t)=\cup_{j=2}^n\left\{(j,p_j(t))\right\}$ where we define
  $p_j(t)=\arg\max_{p_j}\{\max_{d_j} \psi_j\left(t,d_j,p_j\right)\}$
  and $\psi_j\left(t,d_j,p_j\right)= \sum_{k\in j\setminus p_j}E_{k\to
    j}^{d}\left(t\right) + A_{k\to j}^{d-1}-w_{jk}$. Note that before
  convergence, $T(t)$ is not necessarily a tree.

{ One can also look at an equivalent formulation of the problem that can be constructed by
introducing a link representation of the pointer variables (introduce link variables $x_{ij}=0,\pm1$, $0$ if $i$ does not point
$j$, $1$ if $i$ points $j$ and $-1$ is $j$ points $i$).  This is a natural representation for more general versions of the Steiner tree problem but in this paper we use the pointer-depth model.}

\subsection{Main result for spanning trees}

Although iterations of equations \ref{eq:A}-\ref{eq:E} provides a distributed algorithm for
solving Steiner trees, our analysis is currently for the case of
spanning trees. Therefore throughout the rest of the paper we will
only focus on the case of $U=V$. First let us define a notion of
convergence for the algorithm.
{
\begin{definition}
Given a set of initial conditions $\left\{A_{i\to j}(0), B_{i\to
j}(0), C_{i\to j}(0), D_{i\to j}(0), E_{i\to j}(0)\right\}_{i\to j}$,
we say that the BP algorithm \emph{converges} to
$\left\{(p_i,d_i)\right\}_i$, if the decision variables converge to
$\left\{(p_i,d_i)\right\}_i$ (i.e. there exist an integer $N>0$, such
that for all $t>N$ and all $i:$ $p_i(t)=p_i, d_i(t)=d_i$).
\end{definition}
}

\begin{theorem}\label{thm:main}
If the BP algorithm converges to $\left\{(p_i,d_i)\right\}_i$, then
the set of the edges $\{(i,p_i)\}_i$ is the minimum spanning tree
$T^*$.
\end{theorem}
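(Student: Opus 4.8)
The plan is to run the standard \emph{computational-tree} (non-backtracking unrolling) analysis of max-sum. Fix a vertex $v$ and an integer $N$, and let $\mathcal{C}_v(N)$ be the tree of non-backtracking walks of length at most $N$ starting at $v$ in $G$; write $\pi$ for the projection sending a node $u$ of $\mathcal{C}_v(N)$ to the vertex $\bar u\in V$ it copies, give each edge the weight of the $G$-edge it copies, and put the boundary term at each depth-$N$ leaf equal to the value dictated by the arbitrary initialization. The deterministic unrolling identity for max-sum — which uses only that the equations have been iterated $N$ times, \emph{not} that the messages converge — says that for $t\ge N$ the decision $p_v(t)$ equals the root pointer of a configuration $\sigma$ minimizing the lifted retained weight $W(\sigma)=\sum_{\{a,b\}\in E_\sigma}w_{\bar a\bar b}$ over all pointer--depth configurations of $\mathcal{C}_v(N)$ obeying the same local constraints ($p_u=u'\Rightarrow d_{u'}=d_u-1$, and every copy of the vertex $1$ has depth $0$), where $E_\sigma$ is the undirected set of pointer edges. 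Since by hypothesis the decisions stabilize, for all large $N$ this root pointer is the limit $p_v$. Rooting $T^*$ at vertex $1$ and letting $p_v^*$ be the $T^*$-parent of $v$, the edge set $\{(i,p_i^*)\}_{i\ne 1}$ is exactly $T^*$, so it suffices to prove $p_v=p_v^*$ for every $v$; this also shows $\{(i,p_i)\}_i$ is a tree, with no separate argument needed.

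Note that the feasible configurations of $\mathcal{C}:=\mathcal{C}_v(N)$ are exactly the spanning forests of $\mathcal{C}$ whose trees are each rooted at a copy of vertex $1$ (forced, since those copies have depth $0$) and have radius at most $d_{\max}$; $W(\cdot)$ is the total weight of the retained edges, so an optimal $\sigma$ keeps a minimum-weight such forest. Suppose, for contradiction, that $p_v\ne p_v^*$ for some $v$; fix such a $v$, take $N$ large, let $\sigma$ be optimal on $\mathcal{C}$ (so $\sigma$'s root pointer is $p_v$), and let $\mathcal{C}^*$ be the component of the root $v_0$ in the subgraph of $\mathcal{C}$ formed by the edges projecting into $T^*$. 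Because $T^*$ is a tree in $G$, its non-backtracking walks from $v$ are just its simple paths from $v$, so $\pi$ carries $\mathcal{C}^*$ isomorphically onto $T^*$ with $v_0\mapsto v$ (for $N\ge n$); in particular $\mathcal{C}^*$ contains a copy of vertex $1$ at distance $d_v^*\le n-1\le d_{\max}$ from $v_0$, and the $T^*$-rooting of $\mathcal{C}^*$ (each node pointing to its $T^*$-parent, $T^*$-depths) is feasible there and has root pointer $p_v^*$.

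I would then produce a feasible $\hat\sigma$ of strictly smaller weight, contradicting optimality. The modification re-routes $v_0$ (and, recursively, only as many nodes as needed, none further than $O(d_{\max})$ from $v_0$) so that the pointer chain out of $v_0$ runs along $\mathcal{C}^*$, i.e.\ along $T^*$, to a copy of vertex $1$, while every subtree hanging off the re-routed region is left as in $\sigma$ where it already points away and otherwise re-anchored at a copy of vertex $1$ inside it — this is where the many copies of the root vertex present throughout $\mathcal{C}$ are exploited, and it is what keeps the bound $d_i\le d_{\max}$ intact. Reading the symmetric difference $E_\sigma\triangle E_{\hat\sigma}$ from $v_0$ outward exhibits it as a short succession of elementary forest exchanges (delete a retained edge of $\sigma$ not projecting into $T^*$, add an edge projecting into $T^*$ lying on the cut it creates), each of which \emph{strictly} decreases $W$ by at least a fixed $\delta>0$ — precisely because $T^*$ is the \emph{unique} minimum spanning tree — plus boundedly many ``patching'' edges contributing a correction depending on $G$ only; at least one improving exchange is forced because $p_v\ne p_v^*$ puts the root edge in the symmetric difference. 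For $N$ large the whole modification lies in the interior of $\mathcal{C}$, so no depth-$N$ boundary terms enter, and the bookkeeping yields $W(\hat\sigma)<W(\sigma)$. Hence $p_v=p_v^*$ for all $v$ and $\{(i,p_i)\}_i=T^*$.

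The step I expect to be the real obstacle is the construction and weight accounting of $\hat\sigma$: producing a genuinely \emph{feasible} pointer--depth configuration — in particular never letting any $d_i$ exceed $d_{\max}$, which is delicate because the subtrees hanging off the re-routed region can be deep — and then proving that the finitely many non-exchange edges cannot outweigh the strict gain of the exchanges; here one must use that $\sigma$ is itself optimal, so that near $v_0$ it already resembles $T^*$ and the needed re-routing is benign. A smaller but essential point is to state the unrolling identity so that it uses only finitely many iterations, since the theorem assumes only that the decisions — not the messages — converge.
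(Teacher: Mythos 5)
Your framework (computation tree, an unrolling identity that needs only finitely many iterations, an exchange argument against a lift of $T^*$) matches the paper's starting point, but there is a genuine gap exactly where you flag it: the construction of the improving configuration $\hat\sigma$ and its weight accounting are not carried out, and the sketch you give would not close. The ``short succession of elementary forest exchanges'' must be weighed against an uncontrolled collection of ``patching'' edges needed to re-anchor the deep subtrees hanging off the re-routed region at interior copies of vertex $1$; both the gain (at least $\delta>0$ per exchange) and the loss (``a correction depending on $G$ only'') are constants depending on $G$, so nothing forces $W(\hat\sigma)<W(\sigma)$. Moreover, the claim that each exchange strictly decreases $W$ ``because $T^*$ is the unique MST'' follows from the cut optimality condition only for specific, carefully chosen cuts, not for an arbitrary swap of a retained edge for a $T^*$-edge crossing the cut it creates.

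The paper closes this gap with two devices your proposal lacks. First, it does not compare $\sigma$ with $T^*$ directly: it first proves (Lemma \ref{lem:unbalanced-tree}) that the connected component $A_i$ of the root in the optimal configuration on the computation tree is, within radius $2d_{\max}$, an isomorphic copy of the limit configuration $T=\{(i,p_i)\}_i$, and that $T$ is a spanning tree of $G$ (bounded radius, no duplicate labels, all labels present). This step uses convergence of the decisions at \emph{every} vertex, via the unbalanced-computation-tree version of Lemma \ref{cor:bp-solves-tree}, not just convergence of the root decision, which is all your argument invokes. Second, once $T$ is known to be a spanning tree with $T\neq T^*$, the paper runs Prim's algorithm from vertex $1$ and takes the \emph{first} Prim edge $e_k$ absent from $T$, swapping it with the edge $e$ of $T$ that crosses the same cut. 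Because every vertex of the Prim tree $T_k$ --- in particular everything rootward of the swapped edge --- has identical pointers and depths in $T$ and in $T'=(T\cup\{e_k\})\setminus\{e\}$, the corresponding modification of $A_i$ changes exactly one pointer, remains a valid oriented spanning tree of the computation tree (the depth constraint need only be checked at that single edge), and lowers the weight by exactly $w(e)-w(e_k)>0$, with no patching edges at all. Without an analogue of this single-swap, root-preserving exchange, the feasibility and weight bookkeeping for your $\hat\sigma$ do not go through.
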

\paragraph{Note 1.}
For Theorem \ref{thm:main} to hold we only need the equalities
$p_i(t)=p_i, d_i(t)=d_i$ to hold for $N<t\leq N+2d_{\max}+1$.

{
\paragraph{Note 2.}
There are examples for which this BP algorithm does not converge and one needs to use some heuristics to make it converge \cite{Braunstein2006b}. To the best of our knowledge there is no rigorous analysis of these heuristics in the literature.
}

\section{Analysis}
Before proving the Theorem \ref{thm:main} we quickly review the notion
of \emph{computation tree}. Computation trees have been used in most
of the previous analysis of the BP algorithms; see \cite{Bayati2007,BBCZ-JSTAT08}
for a list those works.

\subsection{Computation Tree.}
For any $i\in V$, let $T_{i}^t$ be the $t$-level computation tree
corresponding to $i$, defined as follows: $T_{i}^{t}$ is a weighted
tree of height $t+1$, {rooted at $i$}. All tree-nodes have labels from
the set $\{1,\ldots,n\}$ according to the following recursive rules:

(a) {The root} has label $i$.

(b) The {set of labels of the} $deg_G(i) $ children of the root is
equal to $N(i)$.

(c) {If $s$ is a non-leaf node whose parent has label $r$, then the
  set of labels of its children is $N(s)\backslash\{r\}$.}

\textbf{Notation.} We denote a vertex $u$ of the computation tree by
$[u,i]$ if it has label $i$. We also denote root of the computation
tree $T_i^t$ by $[root,i]$.

Similar to the pointer-depth model in graph $G$, we assign to each
non-leaf vertex $[v,j]$ of $T_{i}^{t}$ two decision variables
$(p_{v},d_{v})$ with $p_{v}\in N([v,j])$, and
$d_{v}\in\{1,\ldots,d_{\max}\}$.  We call such an assignment \emph{valid} if
the following constraints are satisfied:

(a) If for two neighbors $[u,j],[v,k]$ in $T_i^t$, $p_v=[u,j]$ then $d_u=d_v+1$.

(b) For any vertex $[u,j]$ in $T_i^t$ whose label is the same as the root in $G$ (i.e. $j=1$), then $d_u=0$.

Now for any valid assignment, the subtree $\mc{T}=\{([v,j],p_{v})\}$
is called an \emph{oriented spanning tree} of the computation
tree. Figure \ref{fig:one} shows a graph with one of its computation
trees, and an oriented spanning tree on it.  Denote the minimum weight
oriented spanning tree (MWOST) of the computation tree $T_{i}^{t}$ by
$\mc{T}^*(T_{i}^{t})$. Similar argument as in \cite{Bayati2007} shows
that iterations of Eqs.\ref{eq:A}-\ref{eq:E} can be seen as a dynamic programming procedure that finds the MWOST over the computation tree. And Lemma
\ref{cor:bp-solves-tree} that comes next without proof is analogues to
the Corollary 1 { from} \cite{Bayati2007}.
\begin{figure}
\centering
      \includegraphics[width=0.85\linewidth,angle=0]{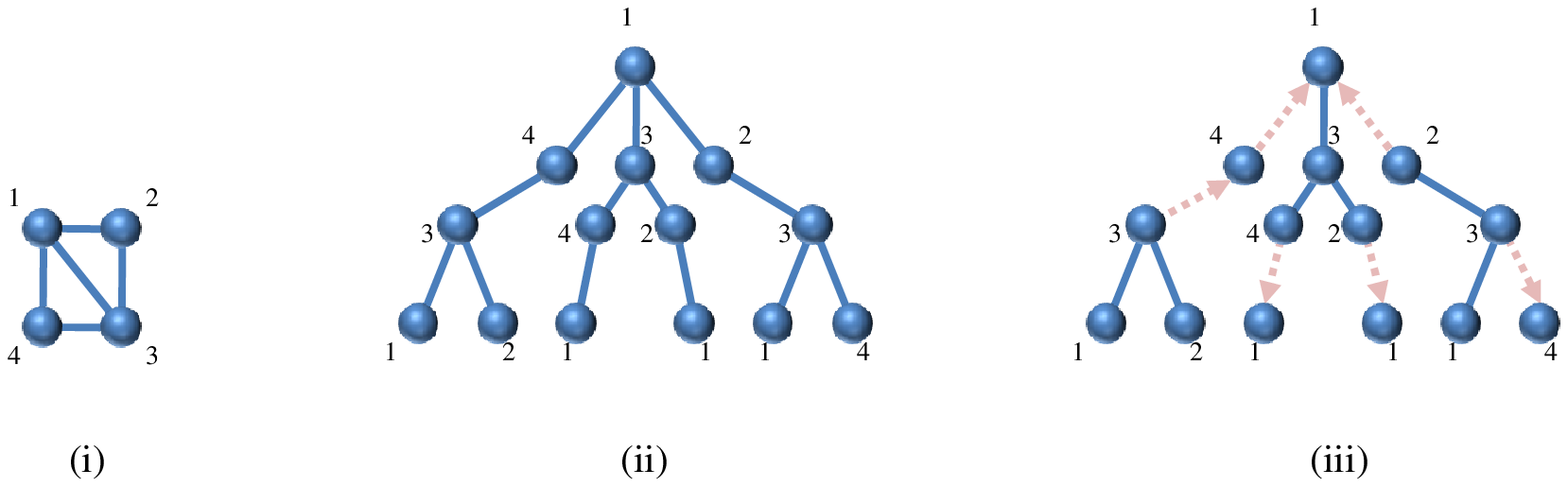}
      \caption{(i) shows a graph with 4 vertices, (ii) represents the
        computation tree $T_1^2$ for it, and (iii) shows an oriented
        spanning tree on the computation tree.}
      \label{fig:one}
\end{figure}

\begin{lemma}\label{cor:bp-solves-tree}
The BP algorithm that is initialized with zero messages, solves the
MWOST problem on the computation tree. In particular, for each vertex
$i$ of $G$ the decision variables $(p_i(t),d_i(t))$ are exactly equal
to the decision variables $(p_i,d_i)$ corresponding to the vertex
$[root, i]$ in $\mc{T}^*(T_{ i}^{t})$.
\end{lemma}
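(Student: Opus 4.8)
The plan is to prove Lemma~\ref{cor:bp-solves-tree} by induction on the iteration count $t$, showing that the max-sum messages produced from the zero initialization are exactly the values of a dynamic program over the unrolled computation tree; the statement about the decision variables at $[\mrm{root},i]$ then follows by reading off an $\arg\max$. A preliminary step is to check that the iterated equations \ref{eq:A}--\ref{eq:E} are merely an efficient rewriting of the abstract recursions \ref{eq:psi}--\ref{eq:phi2}: this is the case analysis indicated just before the statement. Fixing a target $(d_j,p_j)$, the constraint $f_{jk}(d_k,p_k,d_j,p_j)\neq 0$ is satisfied exactly by (i) if $p_j=k$: the pairs with $d_k=d_j-1$ and $p_k\notin\{j,\nl\}$, whose best message value is $A_{k\to j}^{d_j-1}$; (ii) if $p_j=\nl$: the pairs with $p_k\neq j$ and $d_k$ free, best value $D_{k\to j}$; (iii) if $p_j\notin\{k,\nl\}$: either $p_k=j$, $d_k=d_j+1$ (value $C_{k\to j}^{d_j+1}$) or $p_k\neq j$, $d_k$ free (value $D_{k\to j}$), giving best value $E_{k\to j}^{d_j}$. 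Plugging these three values of $\phi_{k\to j}$ into \ref{eq:psi} and splitting the sum over $N(j)\setminus i$ according to which neighbour, if any, is chosen as $p_j$ reproduces \ref{eq:A}--\ref{eq:E}. Hence it is enough to analyse \ref{eq:psi}--\ref{eq:phi2} with all messages initialised to zero.

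For the induction I would work with cavity computation trees. For an ordered edge $j\to i$ of $G$, let $\hat{T}_{j\to i}^{t}$ be the subtree of $T_i^{t}$ rooted at the child of $[\mrm{root},i]$ carrying label $j$; it has height $t$ and is isomorphic to $T_j^{t-1}$ with the branch descending into a copy of $i$ deleted. The inductive claim, for $t\ge 1$, is that $\psi_{j\to i}(t,d,p)$ equals the maximum of $-W_{\mc{T}}$ over all valid assignments $\mc{T}$ of $\hat{T}_{j\to i}^{t}$ whose root-copy of $j$ is assigned depth $d$ and pointer $p$, where $W_{\mc{T}}=\sum_{[v,a]}w_{a\,p_v}$ sums the weight of the out-pointer over all non-leaf tree-vertices. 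For $t=1$, $\hat{T}_{j\to i}^{1}$ is a single non-leaf vertex (label $j$) with only leaf children, so the only contribution to $W_{\mc{T}}$ is $w_{jp}$, and from \ref{eq:psi} with $\phi_{k\to j}\equiv 0$ (the effect of the zero initialization) one gets $\psi_{j\to i}(1,d,p)=-w_{jp}$; the base case holds.

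For the inductive step, write $\hat{T}_{j\to i}^{t+1}$ as a root labelled $j$ with one child $[k]$ for each $k\in N(j)\setminus i$, the subtree below $[k]$ being a copy of $\hat{T}_{k\to j}^{t}$. Because $f_{jk}$ is the only constraint coupling the root with $[k]$, and distinct child-subtrees are constrained independently, a valid assignment of $\hat{T}_{j\to i}^{t+1}$ with root fixed to $(d,p)$ amounts to choosing, for each $k$, a pair $(d_k,p_k)$ with $f_{jk}(d_k,p_k,d,p)\neq 0$ and then, independently, a valid assignment of $\hat{T}_{k\to j}^{t}$ with its root fixed to $(d_k,p_k)$; moreover the weight splits as $W_{\mc{T}}=w_{jp}+\sum_{k}W_k$ with $W_k$ the weight on the $k$-th subtree. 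Maximising $-W_{\mc{T}}$ over these choices and applying the induction hypothesis term by term yields $-w_{jp}+\sum_{k\in N(j)\setminus i}\max_{(d_k,p_k):\,f_{jk}\neq0}\psi_{k\to j}(t,d_k,p_k)$, which by \ref{eq:phi2} and \ref{eq:psi} is exactly $\psi_{j\to i}(t+1,d,p)$. This closes the induction.

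Finally, performing the same one-step decomposition at the genuine root $[\mrm{root},i]$ --- now with all $\deg_G(i)$ children present rather than a cavity tree --- identifies the marginal of \ref{eq:marg}, equivalently the quantity $\sum_{k\in i\setminus p}E_{k\to i}^{d}(t)+A_{p\to i}^{d-1}(t)-w_{ip}$ used by the algorithm, with the maximum of $-W_{\mc{T}}$ over valid assignments $\mc{T}$ of all of $T_i^{t}$ whose root is assigned $(d,p)$, for every $t\ge 0$ (the case $t=0$ being the direct computation $\psi_i(0,d,p)=-w_{ip}$). Taking $\arg\max_{d,p}$ therefore returns precisely the decision variables attached to $[\mrm{root},i]$ in the minimum-weight oriented spanning tree $\mc{T}^{*}(T_i^{t})$; since $(p_i(t),d_i(t))$ is defined as this $\arg\max$, the lemma follows. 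The one genuinely fiddly point is the bookkeeping in the inductive step: one must check that the condition $f_{jk}\neq 0$ on the computation tree faithfully encodes the depth-offset rule and the ``not in the tree'' ($\nl$) conventions, that the global requirement ``every copy of vertex $1$ has depth $0$'' is imposed consistently both in the message domains and in the notion of valid assignment, and that each out-pointer's weight is counted once and only once in the split $W_{\mc{T}}=w_{jp}+\sum_k W_k$ (leaves contributing nothing). These are routine, which is why --- following the analogous Corollary~1 of \cite{Bayati2007} --- the statement is recorded without a detailed proof.
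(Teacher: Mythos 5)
Your argument is correct, and it is precisely the standard unrolling argument (induction on $t$ identifying the cavity messages with the dynamic-programming values on the cavity subtrees $\hat{T}_{j\to i}^{t}$, plus the case analysis reducing Eqs.~\ref{eq:A}--\ref{eq:E} to Eqs.~\ref{eq:psi}--\ref{eq:phi2}) that the paper itself invokes by analogy with Corollary~1 of \cite{Bayati2007} and records without proof. The only point worth making explicit, which neither you nor the paper dwells on, is that the final $\arg\max$ step implicitly needs the MWOST on $T_i^t$ to be unique (or ties to be broken consistently) for the root's decision variables to be well defined.
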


\paragraph{Note 3.}
Lemma \ref{cor:bp-solves-tree} can be generalized to any
\emph{unbalanced computation tree} (a tree that is obtained from
$T_i^t$ by removing a subset of vertices and all of their descendants)
as well.  For an unbalanced tree, there is a unique set of BP initial
conditions that should be used instead of the zero messages. { Lemma \ref{cor:bp-solves-tree} holds for
any model where BP is used and does not depend to the problems studied in this paper (See
\cite{Wei00}, \cite{WeF01}, and \cite{WeF01b} for more details).}

{
\paragraph{Note 4.}
We would like to point out that the main result holds for the BP algorithm with any initial condition and
we assume zero initial condition just to simplify the calculations.  For arbitrary initial condition,
the BP algorithm runs over a slightly modified computation tree. {The new computation
tree is almost} the same computation tree as $T_{ i}^t$, except {that} the leaf edges of
the tree have arbitrary weights and not $w_{ij}$'s from $G$.
}

\subsection{Proof of the main result}

Proof consists of two parts. First we will show { that in} case of
convergence, the estimated MST $T(t)$ is a spanning
tree. Next we will prove that this limit is in fact the minimum
spanning tree.

\subsubsection{Limit is a spanning tree}
First we will show that the limit of the BP algorithm is a spanning
tree.
\begin{lemma}\label{lem:unbalanced-tree}
If the BP algorithm converges to $\left\{(p_i,d_i)\right\}_i$, then
the set of edges $\{(i,p_i)\}_i$ is a spanning tree of $G$.
\end{lemma}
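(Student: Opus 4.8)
The strategy is to show that the edge set $\{(i,p_i)\}_i$ (where we put $p_1 = 1$, the root pointing to itself) has exactly $n-1$ non-root edges and is acyclic; together with the fact that it sits inside $G$, that forces it to be a spanning tree. Acyclicity comes essentially for free from the depth variables: the fixed-point decision variables $(p_i,d_i)$ satisfy the local constraint that $p_i = j$ implies $d_j = d_i - 1$ (this is enforced by $f_{ij}$, hence by the BP update for the winning configuration), and $d_1 = 0$. A putative cycle among the pointers would yield a strictly decreasing then wrapping sequence of depths, a contradiction; so following pointers from any $i$ strictly decreases depth and must terminate at the root. The only real content is therefore to rule out the ``null'' state: we must show that at the fixed point no vertex $i$ has $p_i = \nl$, i.e. every vertex is actually in the tree. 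Once every vertex points (except the root, which points to itself) and the pointer graph is acyclic with every chain ending at $1$, it is connected and has $n-1$ edges, hence a spanning tree.

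\textbf{Key steps.} First I would invoke Lemma~\ref{cor:bp-solves-tree} (and Note~3/Note~4) to pass from the fixed point to the computation tree: for $t$ large, the decision variables $(p_i(t),d_i(t))$ at the root $[root,i]$ of $T_i^t$ equal the converged $(p_i,d_i)$, and they are read off from the minimum-weight oriented spanning tree $\mc{T}^*(T_i^t)$. Second, I would use the convergence hypothesis more globally: not only does the root of $T_i^t$ stabilize, but (by the same dynamic-programming/computation-tree correspondence applied to neighbors) the decision variables at the depth-one vertices $[u,j]$ of $T_i^t$ for $u$ near the root also agree with the converged values $(p_j,d_j)$ of $G$, for $t$ beyond the convergence time plus a constant. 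This lets me transfer the local consistency ``$p_i = j \Rightarrow d_j = d_i - 1$'' and the root condition ``$d_1 = 0$'' from the computation tree back to $G$. Third, suppose for contradiction that $S = \{i : p_i = \nl\}$ is nonempty. Since for $U = V$ every vertex is a terminal and we declared $w_{i\,\nl} = \infty$ only for $i \notin U$ — so in the spanning-tree case the weight $w_{i\,\nl}$ is... here I would appeal to the precise convention: in the spanning setting $p_i = \nl$ should be forbidden or infinitely costly for \emph{every} vertex, because the global constraint demands $T$ span $V$; concretely, on the computation tree $\mc{T}^*(T_i^t)$, an oriented spanning tree by definition assigns a genuine pointer to every non-leaf vertex, so the winning configuration at the stabilized root cannot be $\nl$. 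That is the cleanest route: the object the BP fixed point computes is literally an \emph{oriented spanning tree} of the computation tree, in which the null state does not occur at internal vertices, and the root is internal for $t \geq 1$.

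\textbf{Main obstacle.} The delicate point is the transfer between $G$ and the computation tree in \emph{both} directions and at \emph{multiple} vertices simultaneously: Lemma~\ref{cor:bp-solves-tree} as stated only tells us about the root of $T_i^t$, whereas to conclude acyclicity and connectivity of $\{(i,p_i)\}_i$ in $G$ I need the depth-consistency relation to hold between $p_i$ and $p_{p_i}$, i.e. I need stabilized information at two adjacent vertices of $G$ at once. The fix is to run the argument on $T_i^t$ for $t$ a bit larger than the stabilization time $N$: a depth-one vertex $[u,j]$ of $T_i^t$ is itself the root of a copy of $T_j^{t-1}$ sitting inside, so its decision variables in $\mc{T}^*(T_i^t)$ coincide with $(p_j(t-1), d_j(t-1)) = (p_j,d_j)$ once $t-1 > N$ — and the validity constraint (a) of the oriented spanning tree on $T_i^t$ then forces $d_j = d_i - 1$ whenever $p_i = j$ in $G$. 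Checking that this ``one level down'' stabilization is legitimate (that the MWOST of $T_i^t$ restricted to a subtree rooted at a child is the MWOST of that subtree, which is exactly the dynamic-programming optimality substructure underlying Lemma~\ref{cor:bp-solves-tree}) is the step that needs care; everything after that is the short combinatorial argument above, which I expect to occupy only a few lines.
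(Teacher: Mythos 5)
Your overall route is genuinely different from the paper's, and the combinatorial skeleton is sound and arguably more economical. The paper never argues directly on $G$: it takes the connected component $A_i$ of the MWOST of a large computation tree containing the root and proves three properties (bounded radius via the depth variables, no two vertices with the same label, every label of $V$ appears), concluding that $A_i$ is isomorphic to $T$ and hence that $T$ is a spanning tree. You instead pull the local constraints back to $G$ and observe that a pointer assignment with no null states, strictly decreasing depths along pointer chains, and $d_1=0$ forces every chain to terminate at the root, so the $n-1$ edges $\{(i,p_i)\}_{i\neq 1}$ form a connected subgraph on $n$ vertices and hence a tree. Given the transfer step, this replaces the paper's properties (ii) and (iii) with two lines. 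Your hesitation about the null state also resolves correctly: for $U=V$ every vertex is a terminal, the null pointer is infinitely penalized (or simply outside the admissible domain), so the argmax never selects it.

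The genuine gap is in your justification of the transfer step, which you correctly identify as the crux. You claim that a depth-one vertex $[u,j]$ of $T_i^t$ is the root of a copy of $T_j^{t-1}$ and that the MWOST of $T_i^t$ restricted to that subtree is the MWOST of the subtree. Both halves fail: the subtree of $T_i^t$ hanging below $[u,j]$ has children labelled $N(j)\setminus\{i\}$, so it is $T_j^{t-1}$ with the entire branch through $i$ deleted, not a copy of $T_j^{t-1}$; and the restriction of the global MWOST to that subtree is the optimum \emph{conditioned} on the configuration chosen at the root $[root,i]$ (through the constraint $f_{ij}$), not the unconditional optimum of the subtree. So the dynamic-programming substructure you invoke is the wrong one, and as written it does not give $(p_u,d_u)=(p_j,d_j)$. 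The repair — which is what the paper means by invoking Lemma~\ref{cor:bp-solves-tree} ``and the note after that'' — is to \emph{re-root} the entire tree $T_i^t$ at $[u,j]$: the re-rooted tree is an unbalanced computation tree for $j$ (its root's children carry exactly the labels $N(j)$, with branches of slightly unequal depths), the MWOST is the same configuration regardless of where the tree is rooted, and Note~3 then identifies the decision variables at $[u,j]$ with the converged BP output at $j$. With that substitution your argument goes through; without it, the key relation $p_i=j\Rightarrow d_j=d_i-1$ on $G$ is unproved.
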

\begin{proof}
Let us denote the set of edges $\{(i,p_i)\}_i$ by $T$. Note that from
Lemma \ref{cor:bp-solves-tree} (and the note after that) we obtain the
following.  Since BP algorithm converges to
$\left\{(p_i,d_i)\right\}_i$, then for any vertex $i$ and any radius
$r_i$ one can find a large enough computation tree with root
$[root,i]$ such that in the MWOST of that computation tree, all of the
vertices within distance $r_i$ of the root have decision variables
that are dictated by $\{(i,p_i)\}_i$.  In other words there exist a
number $N_i$ such that in the MWOST of the computation tree
$T_i^{N_i}$, any vertex $[u,j]$ with distance less than $r_i$ from
$[root,i]$ has $d_u=d_j$ and $p_u=[*,p_j]\in N([u,j])$.

Now consider the MWOST $\mc{T}^*(T_{ i}^{N_i})$.  It consists of many
connected pieces. Let $A_i$ be the connected component of
$\mc{T}^*(T_{ i}^{N_i})$ that contains the $[root,i]$.  Note that each
edge of $A_i$ corresponds to some $([u,j],p_u)$ by definition. We list
and prove a few properties about the subtree $A_i$:

\begin{itemize}
\item[(i)] \textbf{$A_i$ has bounded radius.} All vertices of $A_i$
  are within distance at most $2d_{\max}$ from $[root,i]$.\\ \emph{Proof.}
  Consider the unique path $P([u,j],[root,i])$ in $A_i$ that connect a
  vertex $[u,j]$ to the $[root,i]$. The depth variable along the path
  $P([u,j],[root,i])$ either always increases by one (thus
  $|P([u,j],[root,i])|\leq d_{\max}$) or it always decreases by 1 till it
  reaches zero and then increases by 1 up to $d_{\max}$ (or
  $|P([u,j],[root,i])|\leq 2d_{\max}$).

\item[(ii)] \textbf{$A_i$ has no duplicate vertex.} No two vertices of
  $A_i$ have the same labels from the set $V$.  That means no two
  vertices of the form $[u,j]$ and $[v,j]$ belong to
  $A_i$.\\ \emph{Proof.} Assume the contrary, then let $[u,j]$ and
  $[v,j]$ be two such vertices in $A_i$ which have the smallest depth
  variables $d_u=d_v$ (note that by property (i) both $[u,j]$ and
  $[v,j]$ are within distance $2d_{\max}$ of $[root,i]$ which shows
  $d_u=d_v=d_j$).  First assume $j\neq 1$. Consider the vertices of
  the computation tree that are pointed to by $[u,j]$ and $[v,j]$
  (i.e. $p_u=[u',p_j]$ and $p_v=[v',p_j]$).  By design both $[u',p_j]$
  and $[v',p_j]$ belong to $A_i$ since they are connected to $[u,j]$
  and $[v,j]$ respectively, and $d_{u'}=d_u-1=d_j-1$,
  $d_{v'}=d_v-1=d_j-1$. Hence we should have $[u',p_j]=[v',p_j]$ (by
  definition of $[u,j]$ and $[v,j]$ that have smallest value for
  $d_u=d_v$). But this means the vertex $[u',p_j]$ of the computation
  tree has two distinct neighbors $[u,j]$ and $[v,j]$ with the same
  label which is a contradiction because the computation tree and $G$
  have the same local structure at any non-leaf vertex.  The case
  $j=1$ is trivially impossible since the depth variable along the
  path between $[u,j]$ and $[v,j]$ should go from zero to zero.

\item[(iii)] \textbf{$A_i$ has all labels from $V$.} First note that
  $A_i$ has a vertex with label $1$. Because starting form $i$ and
  following the pointers the depth variable is decreasing and it
  becomes zero at some point. That vertex which has depth zero is in
  $A_i$ and has to have label $1$.  Now we show that for any $j\in V$
  there exist a vertex $[u,j]\in A_i$.  Consider the sequence $S=j,
  p_j, p_{p_j}, p_{p_{p_j}}, \ldots$. This sequence has to stop at $1$
  since the depth variable for elements of the sequence is strictly decreasing. So it eventually intersects
  labels that appear in $A_i$. Consider the first time that the
  intersection happens (for an element $[u,k]$ of $A_i$ we have $k\in
  S$). If $\ell$ is the element before $k$ in $S$
  (i.e. $p_\ell=k$). We prove that $\ell$ is also a label in
  $A_i$. This is because $[u,k]$ has the same local structure in the
  computation tree as $k$ in $G$ and $\ell$ is a neighbor of $k$ in
  $G$. Thus there exist a $[v,\ell]\in N([u,k])$ and the distance
  between $[v,\ell]$ and $[root,i]$ is at most $2d_{\max}+1$. So $p_v=p_\ell$
  and $d_v=d_\ell$.  This means that $[v,\ell]$ is connected to
  $[u,k]$ and hence is in $A_i$. Repeating the process, we obtain that
  $j$ is a label in $A_i$.
\end{itemize}
Properties (i)-(iii) show that under $[u,j]\to j$ the tree $A_i$ is
isomorphic to $T=\{(p_i, d_i)\}_i$ and therefore $T$ is a spanning
tree of $G$.  \enp
\end{proof}

\subsubsection{Limit is the minimum weight spanning tree}

To prove that the set $T=\{(p_i,d_i)\}_i$ is the minimum spanning tree
we assume the contrary ($T\neq T^*$). Then we will construct an
oriented spanning tree $\mc{T}(T_i^{N_i})$ that has less weight than
$\mc{T}^*(T_i^{N_i})$ which is a contradiction.

For our proof, we need to give a quick review of Prim's well-known algorithm \cite{Prim1957} for finding the minimum
spanning tree of the graph $G$. The algorithm continuously increases the size of a tree starting with a single vertex until it spans all the vertices.
It starts from an initial subtree $T_0$ of $G$ that contains a single vertex. Then for any $r=0,1,\ldots,n-2$ the following step is repeated: Find the minimum weight edge $(u,v)$ that connects $T_r$ to $G\backslash T_r$ and set $T_{r+1}=T_r\cup \{(u,v)\}$. The tree $T_{n-1}$ is the minimum spanning tree.

Assume that the Prim algorithm starts
with the vertex $1$. Let $e_1,e_2,\ldots,e_{n-1}$ be the order of the
edges that are added during the algorithm. That is
$T_r=\{e_1,e_2,\ldots,e_{r-1}\}$.  Now let $e_k$ be the first edge
that does not belong to $T$. The subgraph $T\cup\{e_k\}$ has a
cycle. Thus it has has an edge $e$ in $T$ that connects $T_{k-1}$ to
outside of $T_{k-1}$. By Prim's algorithm, $w(e)<w(e_k)$. The
inequality is strict since $T^*$ is unique.

Let $T'=(T\cup\{e_k\})\backslash\{e\}$. It is not hard to see that
$T'$ is also a spanning tree of $G$ and $w(T')<w(T)$. Consider the
pointer-depth representation for the tree $T'$ and denote the
corresponding decision variables by $\{(p_i',d_i')\}_i$. Let also
$(x,p_x')$ corresponds to the edge $e$ in this new pointer-depth
representation. Since $1\in T_k\subset T\cap T'$ then for any $i\in
T_k$ we have $(p_i,d_i)=(p_i',d_i')$.

Now we consider the oriented spanning tree
$\mc{T}(T_i^{N_i})$. Similar to the previous section, let $A_i$ be the
connected component of $\mc{T}^*(T_{i}^{N_i})$ that contains the
$[root,i]$. Let $[u,x]\in A_i$ be the unique vertex that has label
$x$. We will change the decision variables of any vertex $[v,j]$ of
$A_i$ from $(p_v,d_j)$ to $(p_v',d_j')$ where $p_v'$ is the unique
vertex in $N([v,j])$ that has label $p_j'$. Denote the new subgraph of
the computation tree by $\mc{T}'(T_{i}^{N_i})$. Clearly $w(\mc{T}'(T_{i}^{N_i}))<w(\mc{T}(T_{i}^{N_i}))$.  Now we only need to show
that $\mc{T}'(T_{i}^{N_i})$ is an oriented spanning tree of the
computation tree to achieve a contradiction.

Since $T'\backslash T=\{e\}$, therefore we only need to check that
local constraints at edge $([u,x],p_u')$ of $\mc{T}'(T_{i}^{N_i})$
satisfy the ones of an oriented spanning tree. Note that all neighbors
of the vertex $[u,x]$ are within the distance $2d_{\max}+1$ of
$[root,i]$. Thus if, $p_u'=[v,p_x']$ then $(p_v,d_v)$ will be equal to
$([*,p_{p_x'}],d_{p_x'}')$. On the other hand $p_x'$ is a vertex in
$T_k$ and for all vertices of $T_k$ the decision variables $(p,d)$ and
$(p',d')$ are the same. Thus $(p_u',d_u')$, $(p_v',d_v')$ will satisfy
the local constraints since $(p_x',d_x')$, $(p_{p_x'},d_{p_x'})$
satisfy the same constraint in $T'$. Therefore we obtained a new
oriented spanning tree of the computation tree which has weight less
than the optimum, $\mc{T}^*(T_{ i}^{N_i})$, which is a
contradiction. So the assumption $T\neq T^*$ was incorrect.  \enp

\section{Acknowledgements}\label{sec:ack}
Mohsen Bayati acknowledges the support of the Theory Group at Microsoft Research and Microsoft Technical Computing Initiative.

\bibliographystyle{unsrt} \bibliography{all}

\end{document}